\definecolor{Gray}{gray}{0.9}
\newcommand{\etal}{\textit{et al. }}
\newcommand{\trace}{\ensuremath{\mathop{\mathrm{Tr}}}}
\def\bL{{\mathbb L}}
\def\bI{{\mathbb I}}
\def\P{{\mathcal P}}
\def\Q{{\mathcal Q}}
\def\bL{{\mathbb L}}
\def\R{{\mathbb R}}
\def\bI{{\mathbb I}}
\def\acosh{{\mathrm{acosh}}}
\begin{document}

\title{On Procrustes Analysis in Hyperbolic Space}

\author{Puoya Tabaghi, Ivan Dokmani\'c, \IEEEmembership{Member, IEEE}
\thanks{Puoya Tabaghi is with the Coordinated Science Lab at University of Illinois at Urbana-Chamapaig (emails: tabaghi2@illinois.edu). Ivan Dokmani\'c is with the Department of Mathematics and Computer Science at University of Basel (email:ivan.dokmanic@unibas.ch)}}


\maketitle

\begin{abstract}
Congruent Procrustes analysis aims to find the best matching between two point sets through rotation, reflection and translation. We formulate the Procrustes problem for hyperbolic spaces, review the canonical definition of the center of point sets, and give a closed form solution for the optimal isometry for noise-free measurements. We also analyze the performance of the proposed method under measurement noise. 
\end{abstract}

\begin{IEEEkeywords}
Hyperbolic geometry, Procrustes Analysis
\end{IEEEkeywords}

\IEEEpeerreviewmaketitle
\section{Introduction}

\IEEEPARstart{I}{n} Greek mythology, Procrustes was a robber who lived in Attica and deformed his victims to match the size of his bed. In 1962, Hurley and Catell used the story of Procrustes to describe a point set matching problem in Euclidean spaces~\cite{hurley1962procrustes}, stated below. 
\begin{problem}
Let $\set{z_n}_{n=1}^{N}$ and $\set{z^{\prime}_n}_{n=1}^{N}$ be two point sets in $\R^d$. The Procrustes problem asks to find a map $\widehat{T}$ that minimizes the sum of the mismatch norms, i.e.,
\[
\widehat{T} = \argmin_{T \in \mathcal{T}} \sum_{n=1}^{N} \norm{z_n - T(z^{'}_n)}_{2}^2
\]
where $\mathcal{T}$ is the set of rotation, reflection, translation, and uniform scaling maps and their compositions~\cite{gower1975generalized}.
\end{problem}

In computer vision, Procrustes analysis is of relevance in \emph{point cloud registration} problems. The task of rigid registration is to find an isometry between two (or more) sets of points sampled from a $2$ or $3$ dimensional object. Point registration has applications in object recognition~\cite{mitra2004registration}, medical imaging~\cite{fitzpatrick1998predicting} and localization of mobile robotics ~\cite{pomerleau2015review}. In signal processing, Procrustes analysis often involved aligning shapes or point sets by a \emph{distance preserving} bijection. Procrustes problems also naturally arise in distance geometry problems (DGPs) where one wants to find the location of a point set that best represents a given set of incomplete point distances, i.e.,
\[
z_1, \ldots, z_N \in \R^d: \norm{z_n - z_m} = d_{mn}, \ \forall (m,n) \in \mathcal{M}
\]
where $\mathcal{M} \subset \{{1,\ldots,N\}}^2$ and $\set{ d_{m,n}: (m,n) \in \mathcal{M}}$ is the set of measured distances~\cite{liberti2014euclidean}. If a distance geometry problem has a solution, it is an orbit of the form
\[
O_\mathcal{Z} = \set{ \set{T(z_n)}_{n=1}^{N} \text{ s.t. } T:\R^d \rightarrow \R^d \mbox{ is an isometry} },
\]
where $\mathcal{Z} = \set{z_n}_{n=1}^{N}$ is a particular solution. In order to uniquely identify the correct solution from all the possible elements in the orbit $O_\mathcal{Z}$, we may be given the exact position of a subset of points, called \emph{anchors}. We use Procrustes analysis to pick the correct solution by finding the best match between the anchors with their corresponding points in the orbit. This technique is commonly used in localization problems~\cite{dokmanic2015euclidean,tabaghi2019kinetic}.

Procrustes analysis can be performed in any metric space. In particular, hyperbolic Procrustes analysis is of great relevance due to the recent surge of interest in hyperbolic embeddings and machine learning~\cite{tabaghi2020hyperbolic,de2018representation}. Furthermore, hyperbolic embeddings are closely connected to the study of hierarchical or tree-like data structures and hyperbolic Procrustes problem solutions may be used to align hierarchical data, e.g., ontologies~\cite{shvaiko2011ontology,euzenat2007ontology}. The goal of ontological studies is to find a (distance preserving) map between a fixed number of entities in two tree-like structures that are best aligned to each other (see \Cref{fig:tree_matching} for an illustration). For example, in \emph{ontology matching} one aims to find correspondences between semantically related entities in heterogeneous ontologies with the goal of ontology merging, query response, or data translation~\cite{shvaiko2011ontology}. 
\begin{figure}[b!] 
	\center
  \includegraphics[width=1 \linewidth]{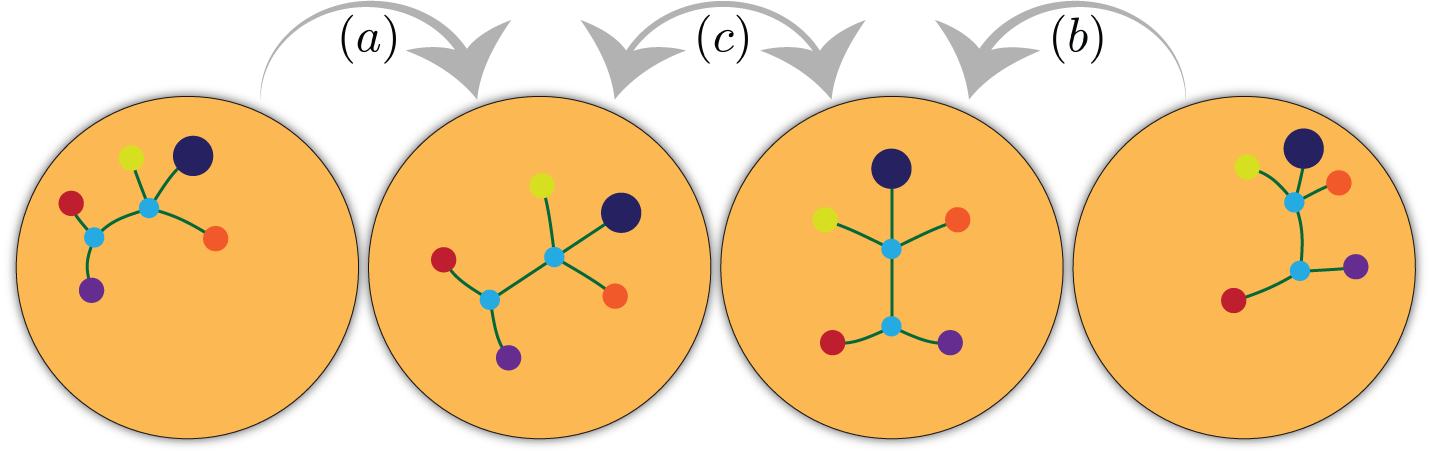}
  \caption{Tree alignment in the Poincar\'e disk \cite{cannon1997hyperbolic}. Hyperbolic Procrustes analysis aims to align two trees, depicted on the far left and far right figures. In steps $(a)$ and $(b)$ we center vertices in both trees, while in step $(c)$ we estimate the unknown rotation map.}
  \label{fig:tree_matching}
\end{figure}

In unsupervised matching problems, the first step in Procrustes-type analyses is to find the correspondence between two point clouds by using the iterative closest point algorithm~\cite{rusinkiewicz2001efficient}. Recently, Alvarez-Melis \etal~\cite{alvarez2020unsupervised} cast the unsupervised hierarchy matching problem in hyperbolic space. Their proposed method jointly learns the \say{soft} correspondence and the alignment map characterized by a hyperbolic neural network. 

In our work, we start with parametric isometries in the 'Loid model of hyperbolic spaces.
It is known that one can decompose any isometry into \emph{elementary} isometries, e.g., hyperbolic translations and hyperbolic rotations (and reflections). In our setting, we aim to find a joint estimate for hyperbolic translation and rotation maps that best align two point sets. 

To accomplish this task, we review the definition of the center of mass, or \emph{centroid}, for a set of points in hyperbolic space. This enables us to subsequently \say{center} each set, and decouple the joint estimation problem into two steps: $(1)$ translate the center of  mass of each point set to the coordinate origin (of the Poincar\'e model), and $(2)$ estimate the unknown rotation factor. While hyperbolic centering has been studied in the literature~\cite{mardia2009directional}, our Procrustes analysis framework is different from prior work in so far that it is similar to its Euclidean counterpart, and provides an optimal estimate for the unknown rotation factor, based on the weighted mean of pairwise inner products. Moreover, we prove a proof that our proposed method ensures the theoretically optimal isometry if the point sets match perfectly. We conclude the paper by giving numerical performance bounds for the task of matching noisy point sets.

\begin{tcolorbox}
{\bf Summary:} Let $\set{x_n}_{n \in [N]}$, and $\set{x^{\prime}_n}_{n \in [N]}$ be two sets of points in a hyperbolic space, related through an isometric map, i.e., $x^{\prime}_n = T(x_n), \forall n \in [N]$. Then,
\[
T = T_{m_{x^\prime}} \circ T_{U} \circ T_{-m_x}
\]
where $m_x, m_y \in \R^d$ are the point sets' centroids, $T_{b}$ is the translation map by vector $b \in \R^d$, and $T_{U}$ is a rotation map by a unitary matrix $U \in \mathbb{O}(d)$; see \Cref{sec:Procrustes_Analysis}. For noisy points, this isometry is suboptimal and can be fine-tuned via a gradient-based algorithm.
\end{tcolorbox}
\emph{Notation.} For $N \in \mathbb{N}$, we let $[N] = \set{1, \ldots, N}$. Depending on the context, $x_1$ can either be the first element of $x \in \R^{d}$, or an indexed vector. We denote the set of orthogonal matrices as $\mathbb{O}(d) = \set{R \in \R^{d \times d}: R^\T R = I}$. For a function $f$ and its inputs $x_1, \ldots, x_N$, we write $\overline{f(x_n)} = \frac{1}{N} \sum_{n \in [N]}f(x_n)$. For a vector $b \in \R^d$, we denote its $\ell_2$ norm as $\norm{b}_2$.
\vspace{-10pt}
\section{'Loid Model of Hyperbolic Space}
Let $x, x^{\prime} \in \R^{d+1}$ with $d \geq 1$. The Lorentzian inner product between $x$ and $x^{\prime}$ is defined as 
\begin{equation}\label{eq:Lorentzian_inner_product}
[x, x^{\prime}] = x^\T H x^{\prime}: H = \begin{pmatrix}
-1 & 0^\T\\
0 & I_d
\end{pmatrix},
\end{equation}
where  $I_d \in \R^{d \times d}$ is the identity matrix. This is an indefinite inner product on $\R^{d+1}$. The vector space $\R^{d+1}$ equipped with the Lorentzian inner product is called a Lorentzian $(d+1)$-space. In a Lorentzian space, we can define notions similar to adjoint and unitary matrices in Euclidean spaces. The $H$-adjoint of the matrix $R$, denoted by $R^{[*]}$, is defined via
\[
[Rx,x^{\prime}] = [x,R^{[*]}x^{\prime}], ~ \forall x, x^{\prime} \in \R^{d+1},
\]
or simply as $R^{[*]} = H^{-1} R^\T H$. An invertible matrix $R$ is called H-unitary if $R^{[*]} = R^{-1}$~\cite{gohberg1983matrices}. 

The 'Loid model of $d$-dimensional hyperbolic space is a Riemannian manifold $\mathcal{L}^d= (\bL^{d} , (g_x)_x )$, where
\[
\bL^{d} = \set{x \in \R^{d+1}: [x,x] = - 1, x_1 > 0}
\]
and the Riemannian metric $g_x: T_x \bL^d \times T_x \bL^d \rightarrow \R$ defined as $g_x(u,v) = [u,v]$. The distance function in the 'Loid model are characterized by Lorentzian inner products as 
\[
d(x, x^{\prime}) = \mathrm{acosh}(-[x, x^{\prime}]), \forall x, x^{\prime} \in \bL^d.
\]
\subsection{Isometries} 
A map $T: \bL^d \rightarrow \bL^d$ is an isometry if it is bijective and preserves distances, i.e. if
\[
d(x,x^{\prime}) = d\big( T(x),T(x^{\prime}) \big),  ~ \forall x,x^{\prime} \in \bL^d.
\]
We can represent any hyperbolic isometry as a composition of two \emph{elementary} maps that are parameterized by a $d$-dimensional vector and a $d \times d$ unitary matrix, as described below. 
\begin{fact} \cite{ratcliffe2006foundations}
\label{fact:1}
The function $T: \bL^d \rightarrow \bL^d$ is an isometry if and only if it can be written as $T(x) = R_{U}R_{b}x$, where 
\[
R_{U} = \left[\begin{array}{cc}
1 & 0^\T\\ 
0  & U 
\end{array}\right], \ R_b = \left[\begin{array}{cc}
\sqrt{1 + \norm{b}_2^2} & b^\T\\ 
b  & (I + b b^\T)^{\frac{1}{2}}
\end{array}\right]
\]
for a unitary matrix $U \in \mathbb{O}(d)$ and a vector $b \in \R^{d}$. 
\end{fact}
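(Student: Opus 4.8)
The plan is to prove both directions, with the forward (sufficiency) implication being a direct computation and the reverse (necessity) implication carrying the real content.

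\emph{Easy direction.} First I would show that any $T(x) = R_U R_b x$ is an isometry, for which it suffices that both $R_U$ and $R_b$ are $H$-unitary, i.e.\ $R^\T H R = H$; such maps preserve the Lorentzian inner product $[\cdot,\cdot]$ and hence the distance $d(x,x') = \acosh(-[x,x'])$. For $R_U$ this is immediate from $U^\T U = I$. For $R_b$ it reduces, after a $2\times 2$ block multiplication of $R_b^\T H R_b$, to the two identities $(I+bb^\T)^{\frac12} b = \sqrt{1+\norm{b}_2^2}\,b$ (because $b$ is an eigenvector of $I+bb^\T$ with eigenvalue $1+\norm{b}_2^2$) and $\big((I+bb^\T)^{\frac12}\big)^2 = I + bb^\T$. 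I also need both maps to send $\bL^d$ into itself rather than the lower sheet; since the $(1,1)$ entries $1$ and $\sqrt{1+\norm{b}_2^2}$ are positive, the first coordinate of the image stays positive, so the upper sheet is preserved.

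\emph{Reduction for the hard direction.} For necessity, suppose $T$ is an isometry. Since $\acosh$ is injective on $[1,\infty)$, $T$ preserves $[x,x']$ for all $x,x'\in\bL^d$. I would first reduce to isometries fixing the basepoint $o = e_1 = (1,0,\ldots,0)^\T$. Observing that the first column of $R_c$ equals $(\sqrt{1+\norm{c}_2^2},\,c^\T)^\T = R_c o$ and that this sweeps out all of $\bL^d$ as $c$ ranges over $\R^d$, I can choose $c$ with $R_c o = T(o)$. A short computation using $H^{-1}=H$ and the symmetry of $R_c$ gives $R_c^{-1} = R_{-c}$, so $S := R_{-c}\circ T$ is an isometry fixing $o$.

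\emph{Isometries fixing $o$ are rotations, then reassembly.} The crux is to show that any isometry $S$ fixing $o$ equals $R_U$ for some $U\in\mathbb{O}(d)$. Because the 'Loid model is a complete, simply connected manifold of constant negative curvature, $\exp_o$ is a global diffeomorphism, so a distance-preserving $S$ fixing $o$ is completely determined by its differential $dS_o$, a linear isometry of the tangent space $T_o\bL^d$. Identifying $T_o\bL^d$ with $\{0\}\times\R^d$ (the Lorentzian-orthogonal complement of $o$, on which $[\cdot,\cdot]$ is positive definite), $dS_o$ is represented by an orthogonal matrix $U$; since $R_U$ is precisely the isometry fixing $o$ with differential $U$, the uniqueness just invoked forces $S = R_U$, whence $T = R_c R_U$. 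To match the stated order I would finally use the conjugation identity $R_U R_b R_U^{-1} = R_{Ub}$ (verified from $U(I+bb^\T)^{\frac12}U^\T = (I + (Ub)(Ub)^\T)^{\frac12}$), which yields $R_c R_U = R_U R_{U^{-1}c}$, so $T = R_U R_b$ with $b = U^{-1}c$.

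\emph{Main obstacle and an alternative.} The main obstacle is the reverse direction, specifically the fact that an isometry fixing a point is pinned down by its (orthogonal) differential there; making this rigorous relies on completeness of $\bL^d$ and global invertibility of $\exp_o$. An alternative, purely algebraic route avoids Riemannian machinery: pick $d+1$ points of $\bL^d$ that form a basis of $\R^{d+1}$, define $L$ linearly by their images under $T$, check that $L$ preserves $[\cdot,\cdot]$ on this basis (hence everywhere, by nondegeneracy of $H$), and conclude $T = L|_{\bL^d}$ since $[\,T(x)-Lx,\,T(p_i)\,] = 0$ for a spanning set $\{T(p_i)\}$; one then classifies the upper-sheet-preserving $H$-unitary matrices $L$ into the boost-rotation form $R_U R_b$. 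Either way, the decomposition into a boost and a rotation is where the essential work lies.
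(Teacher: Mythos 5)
Your proposal is correct, but it takes a genuinely different---and considerably more complete---route than the paper. The paper treats \Cref{fact:1} essentially as a citation to Ratcliffe and offers only a one-line verification sketch: isometries of $\bL^d$ correspond to real $H$-unitary matrices, i.e.\ $R^\T H R = H$, equivalently $R = H^{-\frac{1}{2}} C H^{\frac{1}{2}}$ with $C^\T C = I$ complex, and the $R_U R_b$ form is to be read off from these algebraic conditions. In particular, the paper never addresses why an abstract distance-preserving bijection of $\bL^d$ must act linearly on $\R^{d+1}$, nor why only the sheet-preserving (orthochronous) $H$-unitary matrices arise. You prove both implications from scratch: sufficiency by blockwise verification of $H$-unitarity of $R_U$ and $R_b$ (your two identities are exactly right), and necessity by the stabilizer reduction $S = R_{-c}\circ T$ with $R_c\,o = T(o)$, identification of the stabilizer of $o$ with $\mathbb{O}(d)$, and the conjugation identity $R_U R_b R_U^{-1} = R_{Ub}$ to swap the factor order---in effect a Cartan/polar decomposition of the orthochronous Lorentz group, which supplies structural insight the paper's sketch does not. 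Your ``alternative, purely algebraic route'' (extend $T$ to a linear map via $d+1$ points of $\bL^d$ spanning $\R^{d+1}$, then use nondegeneracy of $H$) is in fact essentially how Ratcliffe establishes linearity, and it should be your primary route rather than the fallback: the paper defines an isometry as a mere distance-preserving bijection, so to speak of the differential $dS_o$ in your Riemannian argument you would first need smoothness of $S$ (Myers--Steenrod), whereas the basis argument gets linearity for free from preservation of $[\cdot,\cdot]$.

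One small repair in the easy direction: positivity of the $(1,1)$ entries of $R_U$ and $R_b$ does not by itself show the upper sheet is preserved, since the off-diagonal blocks could in principle flip the sign of the first coordinate. For $R_b$ applied to $x = \Q(z)$ the first coordinate of the image is $\sqrt{1+\norm{b}_2^2}\sqrt{1+\norm{z}_2^2} + b^\T z$, which is positive by the strict Cauchy--Schwarz estimate $\sqrt{1+\norm{b}_2^2}\sqrt{1+\norm{z}_2^2} > \norm{b}_2\norm{z}_2 \geq -b^\T z$; alternatively, invoke (and prove) the standard criterion that an $H$-unitary matrix is orthochronous if and only if its $(1,1)$ entry is positive. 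This is a two-line fix, not a structural flaw.
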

\Cref{fact:1} can be directly verified by finding the conditions for a real matrix $R$ to be $H$-unitary, i.e., $R^\T H R = H$ or simply $R = H^{-\frac{1}{2}} C H^{\frac{1}{2}} \in \R^{(d+1) \times (d+1)}$ where $C^\T C = I$ and $C \in \C^{(d+1) \times (d+1)}$.
We use this parametric decomposition of rigid transformations to solve the Procrustes problem in $\bL^d$.
\begin{fact} \label{fact:loid_isometry}
$T_{b}^{-1} =T_{-b}$ and $T_{U}^{-1} =T_{U^\T}$ where $b \in \R^d$ and $U \in \mathbb{O}(d)$.
\end{fact}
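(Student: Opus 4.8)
The plan is to reduce each claimed identity to a statement about the representing matrices of \Cref{fact:1}. Since $T_b$ and $T_U$ act on $\bL^d$ by left multiplication with $R_b$ and $R_U$, the identity $T_b^{-1} = T_{-b}$ is equivalent to $R_b R_{-b} = I$, and $T_U^{-1} = T_{U^\T}$ is equivalent to $R_U R_{U^\T} = I$, both as $(d+1)\times(d+1)$ matrices. Each then follows by direct block multiplication; because the matrices are square, checking that one product equals the identity suffices, and it exhibits $R_{-b}$ (resp. $R_{U^\T}$), which again has the form of \Cref{fact:1}, as the inverse of an isometry.

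The rotation case is immediate. Both $R_U$ and $R_{U^\T}$ are block diagonal, so their product has top-left entry $1$, vanishing off-diagonal blocks, and bottom-right block $U U^\T$, which equals $I_d$ precisely because $U \in \mathbb{O}(d)$. Hence $R_U R_{U^\T} = I$ and $T_U^{-1} = T_{U^\T}$.

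For the translation case I would first note that $R_{-b}$ is obtained from $R_b$ by $b \mapsto -b$; since $\norm{-b}_2 = \norm{b}_2$ and $(-b)(-b)^\T = b b^\T$, only the off-diagonal blocks change sign. Block multiplication then gives the top-left entry $(1 + \norm{b}_2^2) - \norm{b}_2^2 = 1$ and the bottom-right block $(I + b b^\T)^{\frac{1}{2}}(I + b b^\T)^{\frac{1}{2}} - b b^\T = (I + b b^\T) - b b^\T = I_d$. The only step that is not pure bookkeeping is the pair of off-diagonal blocks, which involve the matrix square root $(I + b b^\T)^{\frac{1}{2}}$, so this is where I expect the real work. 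I would dispatch it by observing that $b$ is an eigenvector of $I + b b^\T$: since $(I + b b^\T) b = (1 + \norm{b}_2^2)\, b$, it is also an eigenvector of the positive square root, giving $(I + b b^\T)^{\frac{1}{2}} b = \sqrt{1 + \norm{b}_2^2}\, b$, and by symmetry of the square root $b^\T (I + b b^\T)^{\frac{1}{2}} = \sqrt{1 + \norm{b}_2^2}\, b^\T$. With these the top-right block $-\sqrt{1 + \norm{b}_2^2}\, b^\T + b^\T (I + b b^\T)^{\frac{1}{2}}$ and the bottom-left block $\sqrt{1 + \norm{b}_2^2}\, b - (I + b b^\T)^{\frac{1}{2}} b$ both vanish, completing $R_b R_{-b} = I$ and hence $T_b^{-1} = T_{-b}$.
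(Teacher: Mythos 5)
Your proof is correct, and in fact the paper states \Cref{fact:loid_isometry} with no proof at all, so there is no official argument to match against; your direct block-multiplication verification is the natural self-contained one. You correctly identify the only nontrivial step---commuting $b$ past $(I + bb^\T)^{\frac{1}{2}}$---and dispatch it properly: $b$ is an eigenvector of $I + bb^\T$ with eigenvalue $1 + \norm{b}_2^2$, the unique positive square root shares its eigenvectors, and symmetry transfers $(I+bb^\T)^{\frac{1}{2}}b = \sqrt{1+\norm{b}_2^2}\,b$ to the row-vector identity $b^\T (I+bb^\T)^{\frac{1}{2}} = \sqrt{1+\norm{b}_2^2}\,b^\T$; your remarks that a one-sided inverse suffices for square matrices and that $R_{-b}$ and $R_{U^\T}$ again have the form of \Cref{fact:1} (so the inverses are themselves isometries of $\bL^d$) close the argument cleanly. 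It is worth noting the shorter route the paper implicitly has available through its $H$-unitarity machinery: by \Cref{fact:1} and the surrounding discussion, $R_b$ and $R_U$ satisfy $R^\T H R = H$, hence $R^{-1} = R^{[*]} = H^{-1} R^\T H = H R^\T H$ since $H^{-1} = H$. Because $R_b$ is symmetric, conjugation by $H = \mathrm{diag}(-1, I_d)$ merely flips the sign of the off-diagonal blocks, giving $R_b^{-1} = H R_b H = R_{-b}$ in one line; and since $R_U$ is block diagonal, $H R_U^\T H = R_{U^\T}$, giving $T_U^{-1} = T_{U^\T}$. The $H$-unitarity route buys brevity and avoids the matrix square root entirely (at the cost of taking $H$-unitarity of $R_b$ as given, which is the content of \Cref{fact:1}), while your computation buys self-containedness and establishes in passing the eigenvector identity that is exactly what makes the off-diagonal blocks cancel.
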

The hyperbolic translation map $T_b:\bL^d \rightarrow \bL^d$ and hyperbolic rotation map $T_U:\bL^d \rightarrow \bL^d$ are defined as
\begin{align}
T_b(x) &= R_b x,  &&\mbox{for}  \ b \in \R^d, \label{eq:T_b}\\
T_U(x) &= R_U x,   && \mbox{for} \ U \in \mathbb{O}(d). \label{eq:T_U}
\end{align}
\section{Procrustes Analysis} \label{sec:Procrustes_Analysis}
Euclidean (orthogonal) Procrustes analysis proceeds through two steps:
\begin{itemize}
\item Centering: moving the center of mass of both points set to the origin of Cartesian coordinates, and 
\item Finding the optimal rotation/reflection.
\end{itemize}
We proceed to review (and visualize) the definition of the center of mass of a point set in hyperbolic space~\cite[Chapter~13]{mardia2009directional}. 

We start by projecting each point $x \in \bL^d$ onto the following $d$-dimensional subspace
\[
H_d = \set{x \in \R^{d+1}: x_1 = 0}.
\]
Then, we can simply neglect the first element of the projected point (which is always zero), and define a one-to-one map $\mathcal{P}$ between $\bL^d$ and $\R^d$; see \Cref{fig:projection_and_inverse}. In \Cref{def:projection_operator}, we formalize this projection and its inverse. 
\begin{definition} \label{def:projection_operator}
The projection operator $\mathcal{P}: \bL^d \rightarrow \R^d$ and its inverse $\Q$ are defined as
\[
\mathcal{P} \Big( \left[\begin{array}{c}
\sqrt{1+\norm{z}^2} \\ 
z \end{array}\right] \Big) = z, \ \Q (z) = \begin{bmatrix}
\sqrt{1+\norm{z}^2} \\ z
\end{bmatrix}.
\]
\end{definition}
For brevity, we define $\P(X) \bydef [\P(x_1), \ldots, \P(x_N)]$  where $X = [x_1, \ldots, x_N] \in (\bL^d)^N$. Similarly, we consider this extension for $\Q$ as well.

In \Cref{sec:Hyperbolic_Centering}, we review the hyperbolic centering process \cite{mardia2009directional}. In other words, we find a map $T_{b}$ to move the center of mass of projected point sets to $0 \in \R^d$, i.e., $\overline{\P\big( T_{b}(x_n) \big)} = 0$. Then, we show how this centering method helps simplify the hyperbolic Procrustes problem to a sub-problem similar to the famous (Euclidean) orthogonal Procrustes problem.
\begin{figure}[t!]  
	\center
  \includegraphics[width=0.7 \linewidth]{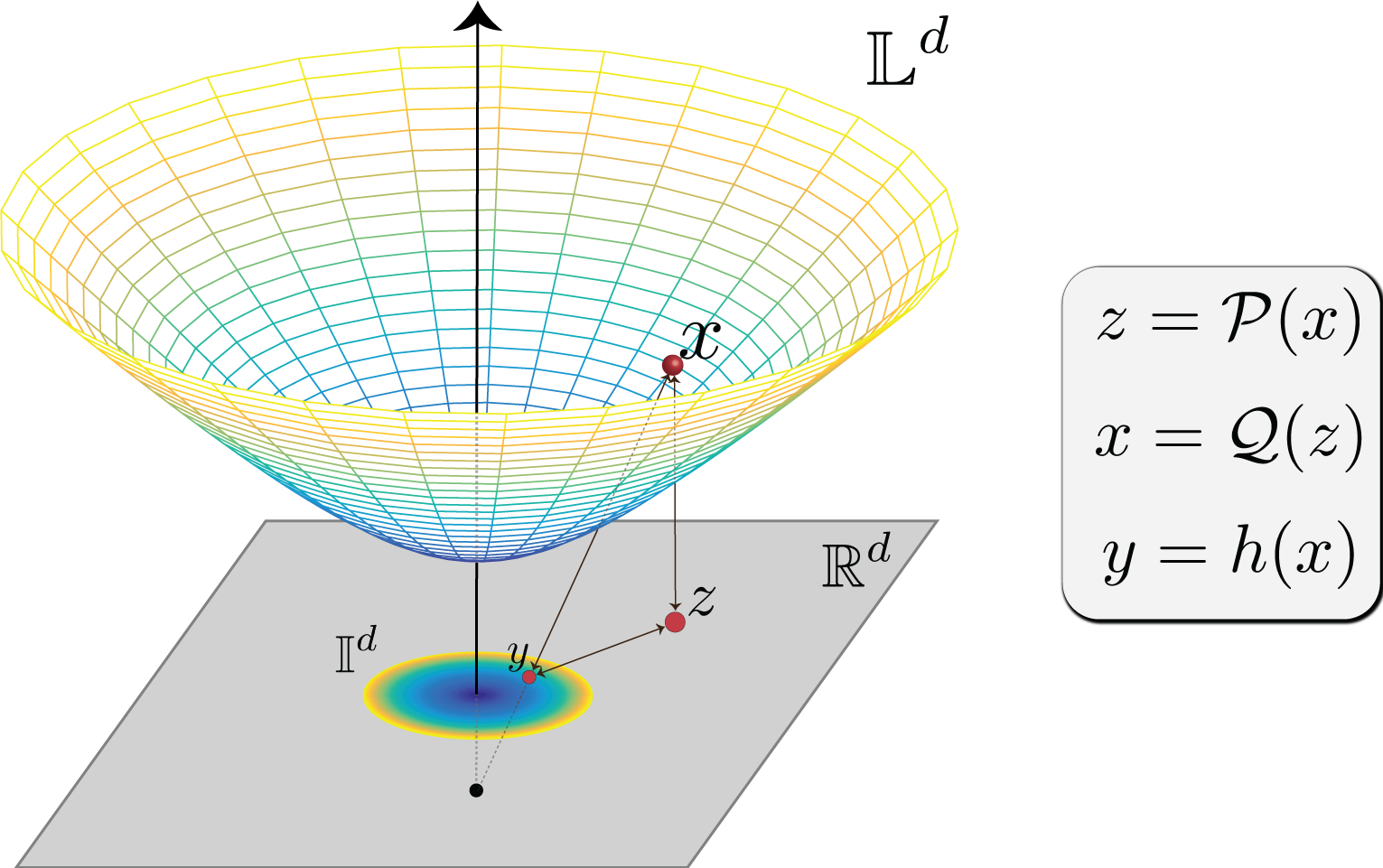}
  \caption{Geometric illustration of $\P$, $\Q$, and stereographic projection $h$.}
  \label{fig:projection_and_inverse}
  \vspace{-10pt}
\end{figure}
\subsection{Hyperbolic Centering}\label{sec:Hyperbolic_Centering}
In Euclidean Procrustes analysis, we have two point sets $z_1, \ldots, z_N$ and $z^{\prime}_1, \ldots z^{\prime}_N$ that are related via a composition of rotation, reflection, and translation maps, i.e.,
\[
z_n = Uz^{\prime}_n + b
\]
where $U \in \mathbb{O}(d)$ and $b \in \R^d$. We extract translation invariant features by moving their point mass to $0 \in \R^d$, i.e.,
\[
z_n - \overline{z_n} = U (z^{\prime}_n -\overline{z^{\prime}_n}).
\]
The main purpose of centering is to map each point set to new locations, $z_n - \overline{z_n}$ and $z^{\prime}_n -\overline{z^{\prime}_n}$ that are invariant with respect to the unknown translation $b$. Subsequently, we can estimate the unknown unitary matrix $\hat{U}$, and then the translation according to $\widehat{b} = \overline{z_n} - \widehat{U} \overline{z^{\prime}_n}$.

In hyperbolic Procrustes analysis, we have
\begin{equation}\label{eq:hyperbolic_procrustes}
x_n = R_b R_U x^{\prime}_n, \forall n \in [N]
\end{equation}
where $U \in \mathbb{O}(d)$ and $b \in \R^d$. In a similar way, we pre-process a point set to extract (hyperbolic) translation invariant locations, i.e., centered point sets. \Cref{lem:centering} gives a simple method to center a projected point set.
\begin{lemma} \cite{mardia2009directional} \label{lem:centering}
Let $x_1, x_2, \ldots, x_N \in \bL^d$. Then, we have
\begin{equation*}
\overline { \P \big(R_{-m_x} x_n \big)}  = 0
\end{equation*}
where $m_x \bydef \frac{1}{\sqrt{-[\overline{x_n}, \overline{x_n}] }} \ \overline{\P (x_n)}$.
\end{lemma}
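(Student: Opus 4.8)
The plan is to exploit the fact that the only nonlinearity in $\overline{\P(R_{-m_x}x_n)}$ sits in the definition of $m_x$, while the averaging and the projection are linear. First I would observe that $\P$ is simply the restriction to $\bL^d$ of the linear coordinate map $\pi:\R^{d+1}\to\R^d$ that deletes the first entry. Since $R_{-m_x}$ does not depend on $n$, both $R_{-m_x}$ and $\pi$ commute with the empirical average, so
\[
\overline{\P(R_{-m_x}x_n)} \;=\; \pi\!\left(R_{-m_x}\,\overline{x_n}\right).
\]
Writing $\mu \bydef \overline{x_n} = (\mu_1, \bar\mu)^\T$ with $\mu_1\in\R$ and $\bar\mu = \overline{\P(x_n)} \in\R^d$, the claim reduces to showing that the lower $d$ coordinates of $R_{-m_x}\mu$ vanish.

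Before computing, I would record that $\mu$ lies in the open forward cone $\set{v:[v,v]<0,\ v_1>0}$: this cone is convex and contains every $x_n$ (as $[x_n,x_n]=-1<0$ and $(x_n)_1>0$), hence it contains their average. Thus $\mu_1>0$ and $\lambda \bydef \sqrt{-[\mu,\mu]} = \sqrt{\mu_1^2 - \norm{\bar\mu}_2^2} > 0$ is well defined, and $m_x = \bar\mu/\lambda$. Reading off the bottom block of $R_{-m_x}$ (that is, $R_b$ with $b=-m_x$), the vector I must show to be zero is
\[
\pi(R_{-m_x}\mu) \;=\; -\mu_1\, m_x \;+\; (I + m_x m_x^\T)^{\frac12}\,\bar\mu .
\]

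The heart of the argument — and the step I expect to be the main obstacle — is evaluating the matrix square root $(I + m_x m_x^\T)^{1/2}$ on $\bar\mu$. The key observation is that $m_x m_x^\T$ is rank one, so $m_x$ is an eigenvector of $I+m_xm_x^\T$ with eigenvalue $1+\norm{m_x}_2^2$; consequently $(I+m_xm_x^\T)^{1/2}m_x=\sqrt{1+\norm{m_x}_2^2}\,m_x$ (equivalently, one writes $(I+m_xm_x^\T)^{1/2}=I+\alpha\, m_xm_x^\T$ with $\alpha$ fixed by $2\alpha+\alpha^2\norm{m_x}_2^2=1$). Since $\bar\mu=\lambda m_x$, this gives $(I+m_xm_x^\T)^{1/2}\bar\mu=\lambda\sqrt{1+\norm{m_x}_2^2}\,m_x$. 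It then remains to verify a single scalar identity: from $\norm{m_x}_2^2=\norm{\bar\mu}_2^2/\lambda^2$ and $\lambda^2=\mu_1^2-\norm{\bar\mu}_2^2$ one obtains $1+\norm{m_x}_2^2=\mu_1^2/\lambda^2$, hence $\lambda\sqrt{1+\norm{m_x}_2^2}=\mu_1$. Substituting back yields $\pi(R_{-m_x}\mu)=-\mu_1 m_x+\mu_1 m_x=0$, which is exactly the claim. The only genuinely nontrivial piece is the closed form for the rank-one matrix square root; once $m_x$ is recognized as its eigenvector, everything else is the scalar bookkeeping above.
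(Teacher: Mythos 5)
Your proof is correct, and there is nothing in the paper to compare it against: \cref{lem:centering} is imported from \cite{mardia2009directional} and used as a black box, with no in-paper proof, so your write-up supplies a self-contained verification rather than an alternative to one. The computation is the natural direct one, and you handle the two points that genuinely need care: (i) well-definedness of $m_x$, via convexity of the forward cone $\set{v \in \R^{d+1}: v_1 > \norm{\pi(v)}_2}$, which gives both $\lambda = \sqrt{-[\mu,\mu]} > 0$ and $\mu_1 > 0$ --- the latter being quietly essential at the end, since $\lambda\sqrt{1+\norm{m_x}_2^2} = \sqrt{\lambda^2 + \norm{\bar\mu}_2^2} = |\mu_1|$, and without $\mu_1>0$ the two terms would add rather than cancel; and (ii) the rank-one square-root identity $(I+m_x m_x^\T)^{1/2} m_x = \sqrt{1+\norm{m_x}_2^2}\, m_x$, which your eigenvector observation settles cleanly. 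One step you use silently and should state: replacing $\P$ by the linear coordinate-deletion map $\pi$ on the \emph{translated} points requires $R_{-m_x} x_n \in \bL^d$, which holds because $R_b$ is $H$-unitary and preserves the sheet $x_1 > 0$ (this is built into the paper's definition of $T_b$ as a map $\bL^d \to \bL^d$ in \eqref{eq:T_b}, but deserves a sentence since $\P$ is only defined on $\bL^d$). With that added, the argument is airtight; as a bonus, your computation shows slightly more than the lemma asserts, namely $R_{-m_x}\overline{x_n} = \lambda e_1$ with $\lambda = \sqrt{-[\overline{x_n},\overline{x_n}]}$, which is precisely the form $\overline{R_{-m_x}x_n} = (a_1, 0)^\T$ that the paper's proof of \cref{prop:hyperbolic_centering} later relies on.
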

In~\Cref{prop:hyperbolic_centering}, we show that $T_{-m_x}$ is the canonical translation map for centering the point set $X \in \big(\bL^d\big)^N$.
\begin{proposition}\label{prop:hyperbolic_centering}
Let $x_1, \ldots, x_N$ and $x^{\prime}_1, \ldots x^{\prime}_N$ in $\bL^{d}$ such that
\[
x_n = R_b R_U x^{\prime}_n, \ \forall n \in [N].
\]
for $b \in \R^d$ and $U \in \mathbb{O}(d)$. Then, $R_{-m_{x}}  x_n = R_{V} R_{-m_{x^{\prime}}} x_{n}^{\prime}$ where $R_V$ is a hyperbolic rotation matrix.
\end{proposition}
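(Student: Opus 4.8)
The plan is to recognize that centering a point set is nothing but applying the hyperbolic translation that carries a suitably normalized centroid to the vertex $\Q(0) = (1, 0^\T)^\T$ of the hyperboloid, and that this centroid transforms \emph{covariantly} under the isometry $R_b R_U$. Concretely, I would set $\alpha = \sqrt{-[\overline{x_n}, \overline{x_n}]}$ and define the normalized centroid $c_x = \overline{x_n}/\alpha$, with $c_{x^\prime} = \overline{x^\prime_n}/\alpha$ defined analogously. I would first check that $c_x \in \bL^d$: its first coordinate is positive since every $(x_n)_1 \ge 1$, and $[c_x, c_x] = -1$ by construction. This last step uses that the future cone $\set{x : [x,x] \le -1,\, x_1 > 0}$ is convex, so that the average $\overline{x_n}$ of points of $\bL^d$ satisfies $[\overline{x_n}, \overline{x_n}] \le -1 < 0$ and $\alpha$ is real and positive.

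Two short computations then carry the argument. First, because $R_b R_U$ is $H$-unitary it preserves the Lorentzian form, and since averaging is linear we have $\overline{x_n} = R_b R_U\, \overline{x^\prime_n}$; hence $[\overline{x_n}, \overline{x_n}] = [\overline{x^\prime_n}, \overline{x^\prime_n}]$, so the \emph{same} normalization $\alpha$ works for both sets and $c_x = R_b R_U\, c_{x^\prime}$. Second, directly from \Cref{def:projection_operator}, $\P(c_x) = \overline{\P(x_n)}/\alpha = m_x$, and since $c_x \in \bL^d$ this is equivalent to $\Q(m_x) = c_x$; likewise $\Q(m_{x^\prime}) = c_{x^\prime}$. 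Combining with $R_b\, \Q(0) = \Q(b)$ gives $R_{m_x}\, \Q(0) = \Q(m_x) = c_x$, so by \Cref{fact:loid_isometry} ($R_{-m_x} = R_{m_x}^{-1}$) we get $R_{-m_x} c_x = \Q(0)$, and the same for the primed set.

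With these in hand, I would track the vertex $\Q(0)$ through the composite isometry $\Phi = R_{-m_x} R_b R_U R_{m_{x^\prime}}$: indeed $R_{m_{x^\prime}}\, \Q(0) = c_{x^\prime}$, then $R_b R_U\, c_{x^\prime} = c_x$, then $R_{-m_x} c_x = \Q(0)$, so $\Phi$ fixes $\Q(0)$. Finally I would invoke \Cref{fact:1}: $\Phi$ is an isometry, hence $\Phi = R_{U^{\prime\prime}} R_{b^{\prime\prime}}$ for some $U^{\prime\prime} \in \mathbb{O}(d)$ and $b^{\prime\prime} \in \R^d$; but $R_{U^{\prime\prime}} R_{b^{\prime\prime}}\, \Q(0) = \Q(U^{\prime\prime} b^{\prime\prime})$ has first coordinate $\sqrt{1 + \norm{b^{\prime\prime}}_2^2}$, so fixing $\Q(0)$ forces $b^{\prime\prime} = 0$ and $\Phi = R_V$ with $V = U^{\prime\prime}$. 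Thus $R_{-m_x} R_b R_U = R_V R_{-m_{x^\prime}}$, and applying both sides to $x^\prime_n$ while using $x_n = R_b R_U x^\prime_n$ yields $R_{-m_x} x_n = R_V R_{-m_{x^\prime}} x^\prime_n$, as claimed.

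I expect the main obstacle to be conceptual rather than computational: seeing that the algebraically defined $m_x$ is exactly $\P$ of the normalized Lorentzian centroid $c_x$, and that $c_x$ is covariant under $R_b R_U$ with an \emph{invariant} normalization $\alpha$. Once centering is recast as \say{send $c_x$ to the vertex}, the conclusion falls out immediately by composing the maps and reducing to the elementary fact that a hyperbolic isometry is a pure rotation precisely when it fixes $\Q(0)$. The one technical point to handle with care is the verification $[\overline{x_n}, \overline{x_n}] < 0$, so that $\alpha$ is well defined and $c_x$ is a bona fide point of $\bL^d$; this follows from convexity of the future timelike cone.
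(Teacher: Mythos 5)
Your proposal is correct, and its skeleton is the same as the paper's: form the conjugated isometry $\Phi = R_{-m_x} R_b R_U R_{m_{x^\prime}}$, decompose it via \Cref{fact:1}, and kill the translation factor by showing $\Phi$ preserves a distinguished centered point. Where you differ is in how that last step is grounded. The paper invokes \Cref{lem:centering} as a black box to get $\overline{R_{-m_x}x_n} = (a_1, 0^\T)^\T$ and $\overline{R_{-m_{x^\prime}}x^\prime_n} = (a_2, 0^\T)^\T$, writes $\Phi = R_c R_V$, and reads off $c=0$ from the averaged equation --- a step that implicitly needs $a_2 \neq 0$ (true, since $a_2 \geq 1$, but unstated). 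You bypass the lemma entirely: you identify $m_x$ as $\P$ of the normalized Lorentzian centroid $c_x = \overline{x_n}/\sqrt{-[\overline{x_n},\overline{x_n}]} \in \bL^d$, verify well-definedness via convexity of the region $\set{x : x_1 \geq \sqrt{1+\norm{\P(x)}_2^2}}$ (equivalently, via $[x_n, x_m] \leq -1$ pairwise), observe that $H$-unitarity makes the normalization invariant so that $c_x = R_b R_U\, c_{x^\prime}$, and conclude that $\Phi$ fixes the vertex $\Q(0)$, whence the first-coordinate computation $\sqrt{1+\norm{b^{\prime\prime}}_2^2} = 1$ forces $b^{\prime\prime}=0$. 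All of these steps check out (in particular $R_b\,\Q(0) = \Q(b)$ and $R_{U^{\prime\prime}}\Q(b^{\prime\prime}) = \Q(U^{\prime\prime}b^{\prime\prime})$ are correct), and your computation $R_{-m_x} c_x = \Q(0)$ in fact re-derives \Cref{lem:centering} en route, since the spatial part of $R_{-m_x}\overline{x_n} = \alpha\,\Q(0)$ is zero. So your version buys self-containedness and fills two details the paper glosses over (well-definedness of $m_x$ and the nondegeneracy $a_2 \neq 0$), at the cost of length; granting \Cref{lem:centering}, the paper's route is the shorter of the two.
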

\begin{proof}
From \Cref{lem:centering}, we have  
\begin{align*}
\overline { R_{-m_x} x_n } = \left[\begin{array}{c}
a_1 \\ 
0 \end{array}\right], \overline { R_{-m_{x^{\prime}}} x^{\prime}_n } = \left[\begin{array}{c}
a_2 \\ 
0 \end{array}\right]
\end{align*}
for $a_1, a_2 \in \R$. On the other hand, we can rewrite \cref{eq:hyperbolic_procrustes} in the following form
\[
R_{-m_{x}}  x_n = R^{\prime}R_{-m_{x^{\prime}}} x_{n}^{\prime}, \forall n \in [N].
\]
where $R^{\prime} =  R_{-m_{x}} R_b R_U R_{m_{x^{\prime}}}$. Since $R^{\prime}$ is an $H$-unitary matrix, we can decompose it as $R^{\prime} = R_{c} R_{V}$ for some $c \in \R^d$ and $V \in \mathbb{O}(d)$. Therefore, we have
\[
\left[\begin{array}{c}
a_1 \\ 
0 \end{array}\right] = R_{c} R_{V} \left[\begin{array}{c}
a_2 \\ 
0 \end{array}\right].
\]
This gives $c = 0$.
\end{proof}
\begin{figure}[b!]
	\center
  \includegraphics[width=1 \linewidth]{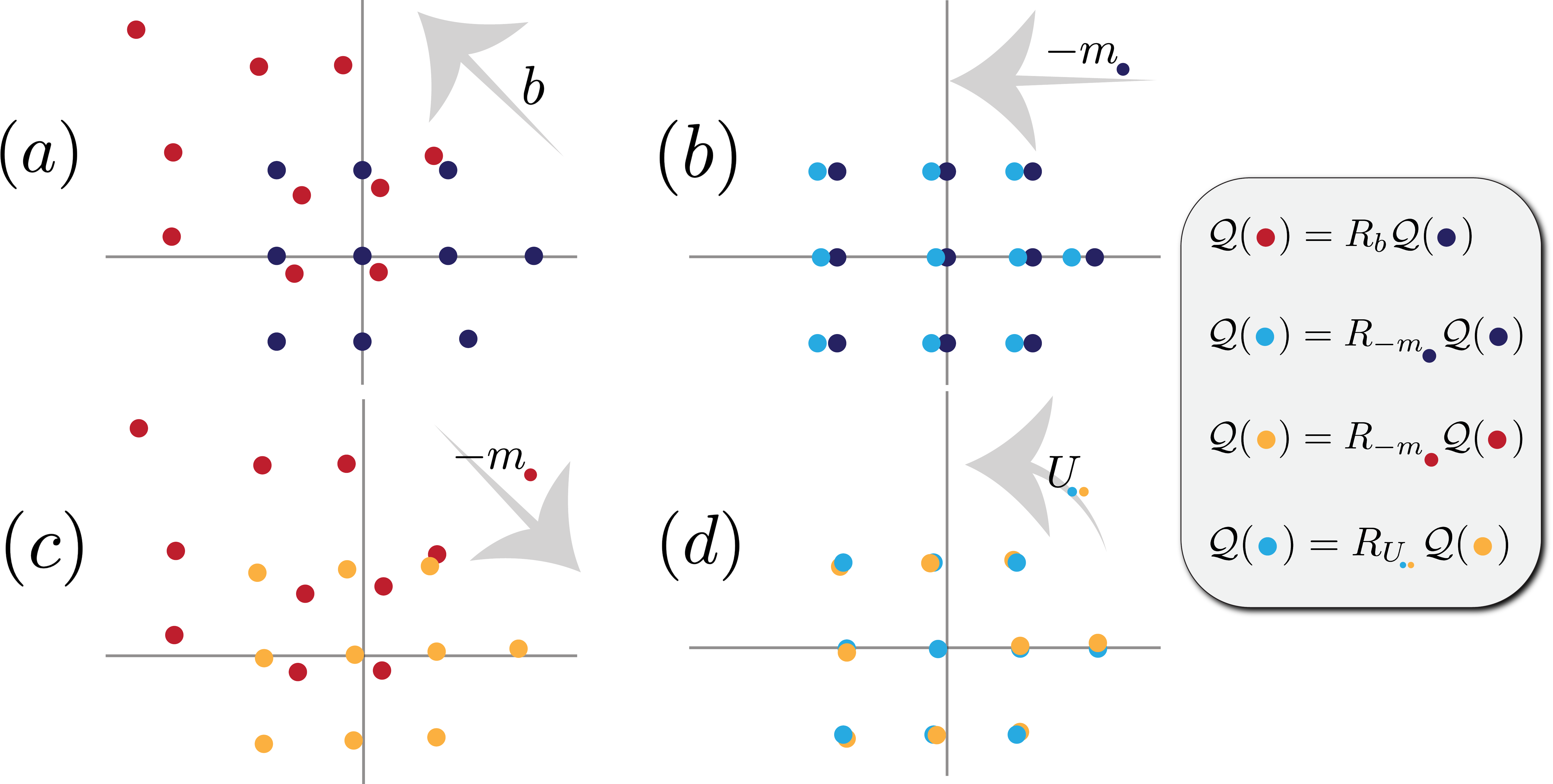}
	  \vspace{-20pt}
  \caption{$(a)$: Red and blue are projected points related by a translation, i.e., $ X= R_b X^{\prime} $. $(b,c)$: Centering each point set. $(d)$: Centered points are related via a rotation, i.e., $R_{m_{x}} R_{b} R_{-m_{x^\prime}} \neq I_d$. }
  \label{fig:translation}
\end{figure}
The map $T_{-m_x}$ not only centers a set of points, but also rotates them. This phenomena is rooted in the noncommutative property of hyperbolic translation or \emph{gyration}. More clearly, for any two vectors $b_1, b_2 \in \R^d$, we have 
\[
R_{b_1} R_{b_2} = R_{V} R_{b_2}R_{b_1}
\]
for a specific unitary matrix $V \in \mathbb{O}(d)$ that accounts for the \emph{gyration} factor; see the example in \Cref{fig:translation} and the follow-up discussion in~\Cref{sec:poincare}. This does not interfere with our analysis since any such rotation is absorbed in $U$, and as we can estimate their \emph{joint} unitary transformation.

Now, let us consider the following noisy case,
\[
x_n = R_b R_U R_{\epsilon_n} x^{\prime}_n, \ \forall n \in [N]
\]
where $\epsilon_n \in \R^d$ is a translation noise for the point $x^{\prime}_n$. Let $z_n = R_{\epsilon_n} x^{\prime}_n$. Then we have $R_{-m_{x}} x_n =  R_{V} R_{-m_{z}} z_n$. The centroid $m_z$ is related to $m_{x^{\prime}} $ and $\set{\epsilon_n}_{n \in [N]}$. Therefore, we can write $m_{z} = m_{x^{\prime}} + \epsilon$ for a $\epsilon \in \R^d$. This leads to
\[
R_{-m_{x}} x_n =  R_{V} R_{\epsilon^{\prime}_n} R_{-m_{x^{\prime}}}x^{\prime}_n, \ \forall n \in [N],
\]
where $R_{\epsilon^{\prime}_n} = R_{-m_{x^{\prime}}-\epsilon} R_{\epsilon_n}R_{m_{x^{\prime}}}$. If the translation noise of each point is sufficiently small, then $R_{V} R_{\epsilon^{\prime}_n} \approx R_{V^{\prime}}$ for a $V^{\prime} \in \mathbb{O}(d)$.
\subsection{Hyperbolic Rotation \& Reflection}
To estimate the unknown hyperbolic rotation, we consider minimizing a weighted discrepancy between the centered point sets. More precisely, 
\begin{equation}\label{eq:cost_function}
\widehat{U} = \argmin_{V \in \mathbb{O}(d)} \sum_{n \in [N]} w_n f \Big( d \big( R_{-m_x} x_n, R_{V}  R_{-m_{x^\prime}}x^{\prime}_n \big) \Big)
\end{equation}
where $d(x,x^{\prime}) = \acosh( -x^\T H x^{\prime})$, $\set{w_n}_{n \in [N]}$ are positive weights, and $f( \cdot ) = \cosh( \cdot )$ is a monotonic function. 
\begin{proposition}
The optimal unitary matrix that solves~\eqref{eq:cost_function} equals $\widehat{U}=U_{l} U_{r}^\T$, where $U_{l} \Sigma U_{r}^\T$ is the singular value decomposition of $\P( R_{-m_x} X) W  \P( R_{-m_{x^\prime}} X^{\prime})^\T$, and $W = \mathrm{diag}(w_1,\ldots, w_N)$.
\end{proposition}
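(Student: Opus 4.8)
The plan is to exploit the deliberate choice $f = \cosh$, which exactly inverts the $\acosh$ in the distance and linearizes the cost. Abbreviating the centered points by $y_n \bydef R_{-m_x}x_n$ and $y_n^{\prime} \bydef R_{-m_{x^{\prime}}}x_n^{\prime}$, I would first note that
\[
f\big( d(y_n, R_V y_n^{\prime}) \big) = \cosh\big(\acosh(-[y_n, R_V y_n^{\prime}])\big) = -[y_n, R_V y_n^{\prime}] = -\,y_n^\T H R_V y_n^{\prime},
\]
so that the objective in \eqref{eq:cost_function} collapses to the affine-in-$V$ expression $-\sum_{n \in [N]} w_n\, y_n^\T H R_V y_n^{\prime}$.

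Next I would substitute the block form of $H$ from \eqref{eq:Lorentzian_inner_product} and of $R_V$ from \Cref{fact:1}, splitting each point as $y_n = \big[(y_n)_1;\ \P(y_n)\big]$. Because $R_V$ fixes the first coordinate and acts as $V$ on the remaining $d$, a one-line computation yields
\[
-\,y_n^\T H R_V y_n^{\prime} = (y_n)_1 (y_n^{\prime})_1 - \P(y_n)^\T V\, \P(y_n^{\prime}),
\]
whose first summand is independent of $V$. Hence minimizing the cost over $V \in \mathbb{O}(d)$ is equivalent to \emph{maximizing} $\sum_{n \in [N]} w_n\, \P(y_n)^\T V\, \P(y_n^{\prime})$.

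Then I would repackage this weighted sum as a trace. Writing $A \bydef \P(R_{-m_x}X)$ and $B \bydef \P(R_{-m_{x^{\prime}}}X^{\prime})$ for the matrices whose columns are the projected centered points, and using $\alpha^\T V \beta = \trace(V \beta \alpha^\T)$ together with $W = \diag(w_1,\ldots,w_N)$, the quantity to maximize becomes
\[
\sum_{n \in [N]} w_n\, \P(y_n)^\T V\, \P(y_n^{\prime}) = \trace\big(V\, B W A^\T\big) = \trace\big(V M^\T\big), \qquad M \bydef A W B^\T,
\]
where $M = \P(R_{-m_x}X)\, W\, \P(R_{-m_{x^{\prime}}}X^{\prime})^\T$ is exactly the matrix whose SVD appears in the statement. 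Finally, I would invoke the classical orthogonal-Procrustes argument: with the SVD $M = U_l \Sigma U_r^\T$ and the change of variable $Z \bydef U_l^\T V U_r \in \mathbb{O}(d)$, one gets $\trace(V M^\T) = \trace(\Sigma Z) = \sum_i \sigma_i Z_{ii} \le \sum_i \sigma_i$, with equality iff $Z = I$, i.e.\ $\widehat U = U_l U_r^\T$.

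The only genuine content is the $\cosh/\acosh$ cancellation and the bookkeeping, through the indefinite metric $H$, that isolates the $V$-dependent term; once the problem is reduced to $\max_V \trace(V M^\T)$ everything is standard. The points to watch are verifying that the discarded summand $\sum_n w_n (y_n)_1 (y_n^{\prime})_1$ really is $V$-free (it is, since $R_V$ acts as the identity on the first coordinate), and noting that the bound $Z_{ii}\le 1$ is attained uniquely at $Z=I$ only when $\Sigma$ has a strictly positive diagonal—if $M$ is rank-deficient the maximizer $\widehat U$ remains optimal but is no longer unique.
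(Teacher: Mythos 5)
Your proposal is correct and follows essentially the same route as the paper: cancel $\cosh\circ\acosh$ to make the objective linear in $R_V$, use the block structure of $H$ and $R_V$ from \Cref{fact:1} to discard the $V$-free first-coordinate term, rewrite the weighted sum as $\trace(VM^\T)$ with $M = \P(R_{-m_x}X)\,W\,\P(R_{-m_{x^\prime}}X^{\prime})^\T$, and maximize via the SVD. The only cosmetic difference is that you prove the trace bound directly through the change of variables $Z = U_l^\T V U_r$ and $Z_{ii}\le 1$, where the paper simply cites von Neumann's trace inequality; your closing remark on non-uniqueness when $M$ is rank-deficient is a valid refinement the paper leaves implicit.
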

\begin{proof}
We can simplify \eqref{eq:cost_function} as follows:
\[
\widehat{U} = \argmax_{V \in \mathbb{O}(d)} \sum_{n \in [N]} \trace R_{-m_{x^\prime}}x^{\prime}_n  w_n  (R_{-m_x} x_n)^\T H R_{V}.
\]
From \Cref{fact:1}, we know that $R_V$ is only parameterized on its lower right block. The proof then follows from representing the sum in matrix form and invoking von Neumann's trace inequality~\cite{mirsky1975trace}.
\end{proof}

\vspace{-15pt}
\section{M\"{o}bius addition}\label{sec:poincare}
In the Poincar\'e model ($\bI^d$), the points reside in the unit $d$-dimensional Euclidean ball. The isometry between the 'Loid and the Poincar\'e model  $h : \bL^{d} \rightarrow \bI^{d}$ is called the \textit{stereographic projection} \cite{cannon1997hyperbolic}. 
The distance between $y, y^{\prime} \in \bI^d$ is given by $d(y,y^{\prime}) = 2 \mathrm{tanh}^{-1} (\norm{-y \oplus y^{\prime}})$ where $\oplus$ is M\"{o}bius addition ---  a noncommutative and nonassociative operator.
\emph{Gyration} measures the \say{deviation} of M\"{o}bius addition from commutativity,
 i.e., $\mathrm{gyr}[y,y^{\prime}](y^{\prime} \oplus y)  = y \oplus y^{\prime}$ \cite{ungar2008gyrovector}. 
 \begin{fact}
The maps $h \circ R_U \circ h^{-1}$ and $h \circ T_U \circ h^{-1}$ are isometries in the Poincar\'e model, and they can be written as
\begin{align*}
h \circ T_U \circ h^{-1}(y) = Uy, \ \ h \circ T_b \circ h^{-1}(y)  = b^{\prime} \oplus y
\end{align*}
where $b^{\prime} = h \circ \Q(b)$, $T_b$ and $T_U$ are defined in \eqref{eq:T_b} and \eqref{eq:T_U}.
\end{fact}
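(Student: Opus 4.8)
The plan is to treat the two assertions—that each conjugated map is an isometry of $\bI^d$, and that it has the stated closed form—separately, carrying the explicit stereographic projection $h$ through every step. The isometry claim needs no computation: $h$ is by construction a distance-preserving bijection between $\bL^d$ and $\bI^d$, and $T_U, T_b$ are isometries of $\bL^d$ by \Cref{fact:1}; since conjugating an isometry by an isometry preserves distances at each stage, both $h \circ T_U \circ h^{-1}$ and $h \circ T_b \circ h^{-1}$ preserve the Poincar\'e distance. I would dispatch this first.

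For the rotation formula I would compute directly. Writing $x = h^{-1}(y)$ and splitting off the first coordinate, the inverse stereographic projection expresses the spatial part of $x$ as a positive multiple of $y$. Since $R_U$ fixes the first coordinate and acts as $U$ on the remaining ones, that positive factor cancels when $h$ is reapplied, leaving $h \circ T_U \circ h^{-1}(y) = Uy$. This is the short, easy half.

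The translation formula is where the work concentrates and is the step I expect to be the main obstacle. A useful guide is that $T_b$ carries the 'Loid origin $o = (1, 0, \ldots, 0)$ to $\Q(b)$, which $h$ sends to $b'$; this pins the image of the Poincar\'e origin and serves as a sanity check, since $b' \oplus 0 = b'$. For the full map I would record the explicit inverse stereographic projection together with the closed form of $(I + bb^\T)^{1/2}$—a rank-one correction to the identity, verified by squaring—so that $R_b$ can be applied to $h^{-1}(y)$ in coordinates. After reapplying $h$ and abbreviating $\beta = \sqrt{1+\norm{b}^2}$, $\gamma = \langle b, y\rangle$, and $r = \norm{y}^2$, I expect a single rational expression in $b$ and $y$. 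The delicate part is recognizing it as $b' \oplus y$: I would substitute $b' = h \circ \Q(b)$ into the M\"obius addition formula, reduce $\norm{b'}^2$ and $\langle b', y\rangle$ to the same abbreviations, and clear a common factor of $(1+\beta)$ from numerator and denominator. Keeping the rank-one term produced by the square root aligned with the matching term in the M\"obius numerator is the bookkeeping most prone to error.

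Should this reconciliation prove unwieldy, I would fall back on uniqueness of hyperbolic isometries: both $h \circ T_b \circ h^{-1}$ and $y \mapsto b' \oplus y$ are isometries of $\bI^d$ sending the origin to $b'$, so they differ only by a rotation fixing $b'$, and matching their derivatives at the origin would force that rotation to be the identity. I expect the direct computation to be more self-contained, however, since evaluating the derivative of M\"obius addition at the origin is itself not free.
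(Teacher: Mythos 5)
Your proposal is correct, but it takes a genuinely different route from the paper. The paper performs essentially no computation: it invokes Ungar's gyrovector formalism, deducing the translation isometry from the Gyrotranslation theorem $-(c \oplus y) \oplus c \oplus y^{\prime} = \mathrm{gyr}[c,y](-y \oplus y^{\prime})$ together with the norm-invariance of gyrations, so that left M\"obius addition preserves $d(y,y^{\prime}) = 2\tanh^{-1}(\norm{-y \oplus y^{\prime}})$; the identification of $h \circ T_b \circ h^{-1}$ with $y \mapsto b^{\prime} \oplus y$ is inherited from the cited literature rather than derived. You instead verify the closed forms in explicit coordinates: the rotation case cancels exactly as you say, and the translation case does close with the ingredients you list, namely $(I + bb^\T)^{\frac{1}{2}} = I + \frac{bb^\T}{1+\beta}$ with $\beta = \sqrt{1+\norm{b}_2^2}$, and $b^{\prime} = b/(1+\beta)$. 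In fact your fallback rigidity argument is the cleanest way to finish, and the derivative of M\"obius addition at the origin is cheaper than you fear: expanding the addition formula to first order in $y$ gives differential $(1-\norm{b^{\prime}}_2^2)I$ at $0$, while a one-line chain rule through $h^{-1}$, $R_b$, $h$ (using the rank-one square root) gives differential $\frac{2}{1+\beta}I$ for the conjugated map, and these agree since $1-\norm{b^{\prime}}_2^2 = \frac{2}{1+\beta}$; since a hyperbolic isometry is determined by its value and differential at one point, the two maps coincide. What each approach buys: your computation is elementary and makes the fact self-contained, independent of gyrovector theory; the paper's citation-based route is shorter and obtains for free the gyration bookkeeping $\mathrm{gyr}[c,y]$ that the same section reuses to explain why centering introduces an extra rotation factor.
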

The translation isometry is a direct result of the Gyrotranslation theorem equality, 
\[
-(c \oplus y) \oplus c \oplus y^{'} =  \mathrm{gyr}[c, y](-y \oplus y^{\prime}),
\]
where $c \in \bI^d$ \cite{ungar2008gyrovector}. Therefore, left M\"{o}bius addition preserves the distances of point sets in the Poincar\'e model\footnote{M\"{o}bius gyrations hence keep the norm that they inherit from $\R^d$ invariant, i.e., $\norm{\mathrm{gyr}[c, y] (-y \oplus y^{\prime})} = \norm{-y \oplus y^{\prime}}$ \cite{ungar2008gyrovector}.}. 
We can hence perform a Procrustes analysis in the Poincar\'e model by $(1)$ centering each point set, i.e., subtracting their center of mass from the left hand side of the M\"{o}bius addition, and $(2)$ estimating the remaining rotation factor --- a composition of gyrations and the initial unknown rotation between the two point sets.
\enlargethispage{\baselineskip}
\enlargethispage{\baselineskip}
\section{Numerical Analysis}
Let $x_n = R^* R_{\epsilon_n} x^{\prime}_n, \ \forall n \in [N]$ where $R^*$ is an $H$-unitary matrix and $\epsilon_1, \ldots, \epsilon_N$ is the set of translation noise samples. 

We compute the following $H$-unitary operators to match the point sets $X, X^{\prime}$:
\begin{itemize}
\item $R_P$: The matrix estimated by our proposed method;
\item $R_{\mathrm{GD}}$: Let $e(X, \overline{X}) \bydef \frac{1}{Nd} \sum_{n \in [N]}d(x_n, \overline{x}_n)$ be the normalized discrepancy between $X$ and $\overline{X}$. The matrix $R_{\mathrm{GD}}$ is computed by an iterative gradient descent method: We initialize $R_{\mathrm{GD}} = I_{d+1}$, and iterate the following steps: $(1)$ $\widehat{b} = - \alpha \frac{\partial}{\partial b} e(X,R_b R_{\mathrm{GD}} X^{\prime})|_{b=0}$ for a small $\alpha >0$; $(2)$ $\widehat{U} = \argmax_{U \in \mathbb{O}(d)} \sum_{n \in [N]} [x_n, R_U R_{\widehat{b}} R_{\mathrm{GD}} x^{\prime}_n]$; $(3)$ Update $R_{\mathrm{GD}} \leftarrow R_{\widehat{U}} R_{\widehat{b}} R_{\mathrm{GD}}$;
\item $R_{\mathrm{GD+P}}$: We can combine the aforementioned methods by $(1)$ solving the problem with our method, and $(2)$ fine-tuning the estimated isometry by applying the gradient method on the point sets $X$ and $R_P X^{\prime}$. 
\end{itemize}
For a random $H$-unitary $R^*$ and all $n \in [N]$, we sample $d$-dimensional $z_n \sim \mathcal{N}(0,I)$, and $\epsilon_n \sim 10^{-2} \mathcal{N}(0,I)$; Then, we let $x^{\prime}_n = \mathcal{Q}(z_n)$ and $x_n = R^*R_{\epsilon_n} x^{\prime}_n$. For $10^3$ random $(X, X^{\prime})$ pairs, we compute their normalized discrepancy $e(X, R X^{\prime})$, where $R \in \set{R_{P}, R_{\mathrm{GD}},R_{\mathrm{GD+P}}}$.\footnote{For our method, we choose $W = 11^\T$.} All methods successfully denoise the measurements, i.e., $e(X, R^* X^{\prime}) > e(X, R X^{\prime})$; see \Cref{fig:analysis} $(a)$. We should note that the gradient descent method does not necessarily converge to an acceptable solution. Therefore, we report the number of outlier trials, i.e.,
\begin{equation}\label{eq:outlier}
(X,X^{\prime}) : |e(X, R X^{\prime}) - Q_2 | > k \frac{1}{2}|Q_3-Q_1|
\end{equation}
where $Q_1,Q_2$ and $Q_3$ are first, second and third quartiles of the total reported discrepancies, and $k=5$ for a conservative criterion to pick outliers (see ~\Cref{fig:analysis} $(b)$). The gradient descent method has the most number of outlier whereas our proposed method has the minimum number of outliers --- comparable to outliers in the measurement noise.  Therefore, the proposed method robustly solves the hyperbolic Procrustes problem and its accuracy can be moderately improved  with a post fine-tuning gradient method.
\begin{figure}[t] 
	\center
  \includegraphics[width=1 \linewidth]{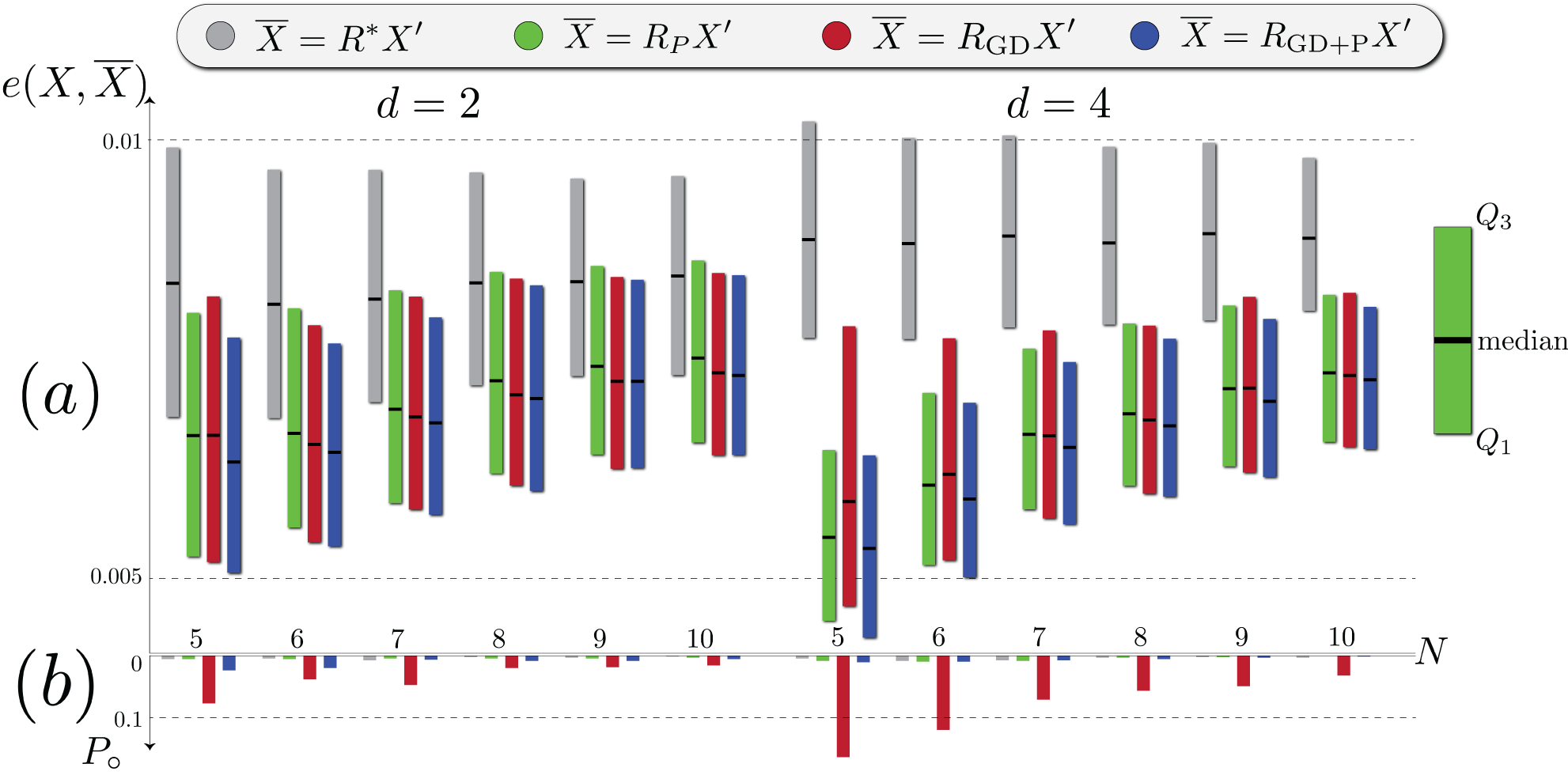}
	 \vspace{-15pt}
  \caption{$(a)$ Normalized discrepancy for random hyperbolic point sets of size $N \in \set{5, \ldots, 10}$ and dimensions $d \in \set{2,4}$. For $10^3$ trials, we report the quartiles $Q_1,Q_2$ and $Q_3$ since they are robust to outliers. $(b)$ The probability of an outlier event $P_\circ = 10^{-3} \times \mbox{total number of outliers}$, e.g., the fraction of examples that failed to converge or outlier defined in the sense of ~\eqref{eq:outlier}.}
  \label{fig:analysis}
  	 \vspace{-10pt}
\end{figure}
\vspace{-5pt}
\section{Conclusion}
Inspired by its Euclidean counterpart, we introduced the Procrustes problem in hyperbolic spaces. We reviewed the (indefinite) Lorentizian inner product, and described how $H$-unitary matrices represent isometries in the 'Loid model of hyperbolic spaces. Using the  parameterized decomposition of hyperbolic isometries in terms of hyperbolic rotation and translation, we showed that moving the \emph{center of mass} to the origin gives point sets that are invariant to hyperbolic translation (for the case of no measurement noise). We then used the centered point sets to estimate the unknown rotation factor. 
\vspace{-15pt}
\section{Acknowledgment}
The authors would like to thank Prof. Olgica Milenkovic for helpful discussions and suggestions.
\bibliography{bare_jrnl}

\begin{thebibliography}{20}
\providecommand{\natexlab}[1]{#1}
\providecommand{\url}[1]{\texttt{#1}}
\expandafter\ifx\csname urlstyle\endcsname\relax
  \providecommand{\doi}[1]{doi: #1}\else
  \providecommand{\doi}{doi: \begingroup \urlstyle{rm}\Url}\fi

\bibitem[Alvarez-Melis et~al.(2020)Alvarez-Melis, Mroueh, and
  Jaakkola]{alvarez2020unsupervised}
David Alvarez-Melis, Youssef Mroueh, and Tommi Jaakkola.
\newblock Unsupervised hierarchy matching with optimal transport over
  hyperbolic spaces.
\newblock In \emph{International Conference on Artificial Intelligence and
  Statistics}, pages 1606--1617. PMLR, 2020.

\bibitem[Cannon et~al.(1997)Cannon, Floyd, Kenyon, Parry,
  et~al.]{cannon1997hyperbolic}
James~W Cannon, William~J Floyd, Richard Kenyon, Walter~R Parry, et~al.
\newblock Hyperbolic geometry.
\newblock \emph{Flavors of geometry}, 31:\penalty0 59--115, 1997.

\bibitem[De~Sa et~al.(2018)De~Sa, Gu, R{\'e}, and Sala]{de2018representation}
Christopher De~Sa, Albert Gu, Christopher R{\'e}, and Frederic Sala.
\newblock Representation tradeoffs for hyperbolic embeddings.
\newblock \emph{Proceedings of machine learning research}, 80:\penalty0 4460,
  2018.

\bibitem[Dokmanic et~al.(2015)Dokmanic, Parhizkar, Ranieri, and
  Vetterli]{dokmanic2015euclidean}
Ivan Dokmanic, Reza Parhizkar, Juri Ranieri, and Martin Vetterli.
\newblock Euclidean distance matrices: essential theory, algorithms, and
  applications.
\newblock \emph{IEEE Signal Processing Magazine}, 32\penalty0 (6):\penalty0
  12--30, 2015.

\bibitem[Euzenat et~al.(2007)Euzenat, Shvaiko, et~al.]{euzenat2007ontology}
J{\'e}r{\^o}me Euzenat, Pavel Shvaiko, et~al.
\newblock \emph{Ontology matching}, volume~18.
\newblock Springer, 2007.

\bibitem[Fitzpatrick et~al.(1998)Fitzpatrick, West, and
  Maurer]{fitzpatrick1998predicting}
J~Michael Fitzpatrick, Jay~B West, and Calvin~R Maurer.
\newblock Predicting error in rigid-body point-based registration.
\newblock \emph{IEEE transactions on medical imaging}, 17\penalty0
  (5):\penalty0 694--702, 1998.

\bibitem[Gohberg et~al.(1983)Gohberg, Lancaster, and
  Rodman]{gohberg1983matrices}
Israel Gohberg, Peter Lancaster, and Leiba Rodman.
\newblock Matrices and indefinite scalar products.
\newblock 1983.

\bibitem[Gower(1975)]{gower1975generalized}
John~C Gower.
\newblock Generalized procrustes analysis.
\newblock \emph{Psychometrika}, 40\penalty0 (1):\penalty0 33--51, 1975.

\bibitem[Hurley and Cattell(1962)]{hurley1962procrustes}
John~R Hurley and Raymond~B Cattell.
\newblock The procrustes program: Producing direct rotation to test a
  hypothesized factor structure.
\newblock \emph{Behavioral science}, 7\penalty0 (2):\penalty0 258, 1962.

\bibitem[Liberti et~al.(2014)Liberti, Lavor, Maculan, and
  Mucherino]{liberti2014euclidean}
Leo Liberti, Carlile Lavor, Nelson Maculan, and Antonio Mucherino.
\newblock Euclidean distance geometry and applications.
\newblock \emph{SIAM review}, 56\penalty0 (1):\penalty0 3--69, 2014.

\bibitem[Mardia and Jupp(2009)]{mardia2009directional}
Kanti~V Mardia and Peter~E Jupp.
\newblock \emph{Directional statistics}, volume 494.
\newblock John Wiley \& Sons, 2009.

\bibitem[Mirsky(1975)]{mirsky1975trace}
Leon Mirsky.
\newblock A trace inequality of john von neumann.
\newblock \emph{Monatshefte f{\"u}r mathematik}, 79\penalty0 (4):\penalty0
  303--306, 1975.

\bibitem[Mitra et~al.(2004)Mitra, Gelfand, Pottmann, and
  Guibas]{mitra2004registration}
Niloy~J Mitra, Natasha Gelfand, Helmut Pottmann, and Leonidas Guibas.
\newblock Registration of point cloud data from a geometric optimization
  perspective.
\newblock In \emph{Proceedings of the 2004 Eurographics/ACM SIGGRAPH symposium
  on Geometry processing}, pages 22--31, 2004.

\bibitem[Pomerleau et~al.(2015)Pomerleau, Colas, and
  Siegwart]{pomerleau2015review}
Fran{\c{c}}ois Pomerleau, Francis Colas, and Roland Siegwart.
\newblock A review of point cloud registration algorithms for mobile robotics.
\newblock 2015.

\bibitem[Ratcliffe et~al.(2006)Ratcliffe, Axler, and
  Ribet]{ratcliffe2006foundations}
John~G Ratcliffe, S~Axler, and KA~Ribet.
\newblock \emph{Foundations of hyperbolic manifolds}, volume 149.
\newblock Springer, 2006.

\bibitem[Rusinkiewicz and Levoy(2001)]{rusinkiewicz2001efficient}
Szymon Rusinkiewicz and Marc Levoy.
\newblock Efficient variants of the icp algorithm.
\newblock In \emph{Proceedings third international conference on 3-D digital
  imaging and modeling}, pages 145--152. IEEE, 2001.

\bibitem[Shvaiko and Euzenat(2011)]{shvaiko2011ontology}
Pavel Shvaiko and J{\'e}r{\^o}me Euzenat.
\newblock Ontology matching: state of the art and future challenges.
\newblock \emph{IEEE Transactions on knowledge and data engineering},
  25\penalty0 (1):\penalty0 158--176, 2011.

\bibitem[Tabaghi and Dokmani\'{c}(2020)]{tabaghi2020hyperbolic}
Puoya Tabaghi and Ivan Dokmani\'{c}.
\newblock Hyperbolic distance matrices.
\newblock In \emph{Proceedings of the 26th ACM SIGKDD International Conference
  on Knowledge Discovery \& Data Mining}, KDD '20, page 1728–1738.
  Association for Computing Machinery, 2020.
\newblock ISBN 9781450379984.

\bibitem[Tabaghi et~al.(2019)Tabaghi, Dokmani{\'c}, and
  Vetterli]{tabaghi2019kinetic}
Puoya Tabaghi, Ivan Dokmani{\'c}, and Martin Vetterli.
\newblock Kinetic euclidean distance matrices.
\newblock \emph{IEEE Transactions on Signal Processing}, 68:\penalty0 452--465,
  2019.

\bibitem[Ungar(2008)]{ungar2008gyrovector}
Abraham~Albert Ungar.
\newblock A gyrovector space approach to hyperbolic geometry.
\newblock \emph{Synthesis Lectures on Mathematics and Statistics}, 1\penalty0
  (1):\penalty0 1--194, 2008.

\end{thebibliography}

\end{document}